\long\def\symbolfootnote[#1]#2{\begingroup
\def\thefootnote{\fnsymbol{footnote}}\footnote[#1]{#2}\endgroup}
\newtheorem{theorem}{Theorem}
\newcommand{\In}{\mathsf{In}}
\newcommand{\Mid}{\mathsf{Mid}}
\newcommand{\Out}{\mathsf{Out}}
\title{Cooperative Regenerating Codes for \\ Distributed Storage Systems}
\author{Kenneth W. Shum, \IEEEmembership{Member, IEEE}\thanks{K. Shum is with the Institute of Network Coding, the Chinese University of Hong Kong, Shatin, Hong Kong. Email: wkshum at inc.cuhk.edu.hk.}}
\begin{document}

\maketitle

\begin{abstract}
\symbolfootnote[0]{This work was partially supported by a grant from the University Grants Committee of the Hong Kong Special Administrative Region, China (Project No. AoE/E-02/08).} When there are multiple node failures in a distributed storage system, regenerating the failed storage nodes individually in a one-by-one manner is suboptimal as far as repair-bandwidth minimization is concerned. If data exchange among the newcomers is enabled, we can get a better tradeoff between repair bandwidth and the storage per node. An explicit and optimal construction of cooperative regenerating code is illustrated.
\end{abstract}

\begin{keywords} Distributed Storage,  Repair Bandwidth, Regenerating Codes, Erasure Codes, Network Coding.
\end{keywords}

\section{Introduction}
Distributed storage system provides a scalable solution to the ever-increasing demand of reliable storage. The storage nodes are distributed in different geographical locations, and in case some disastrous event happened to one of them, the source data would remain intact.  There are two common strategies for preventing data loss against storage node failures. The first one, employed by the current Google file system~\cite{GFS}, is {\em data replication}. Although replication-based scheme is easy to manage, it has the drawback of low storage efficiency. The second one is based on {\em erasure coding}, and is used in Oceanstore~\cite{Oceanstore} and Total Recall~\cite{Totalrecall} for instance. With erasure coding, The storage network can be regarded as an erasure code which can correct any $n-k$ erasures; a file is encoded into $n$ pieces of data, and from any $k$ of them the original file can be reconstructed.

When a storage node fails, an obvious way to repair it is to rebuild the whole file from some other $k$ nodes, and then re-encode the data. The disadvantage of this method is that,  when the file size is very large, excessive traffic is generated in the network. The bandwidth required in the repairing process seems to be wasted, because only a fraction of the downloaded data is stored in the new node after regeneration. By viewing the repair problem as a single-source multi-cast problem in network coding theory, Dimakis {\em et al.} discovered a tradeoff between the amount of storage in each node and the bandwidth required in the repair process~\cite{DGWR07}.  Erasure codes for distributed storage system, aiming at minimizing the repair-bandwidth, is called {\em regenerating code}. The construction of regenerating code is under active research. We refer the readers to~\cite{DGWR2010} and the references therein for the application of network coding in distributed storage systems.

Most of the results in the literature on regenerating codes are for repairing a single storage node. However, there are several scenarios where {\em multiple} failures must be considered.
Firstly, in a system with high churn rate, the nodes may join and leave the system very frequently. When two or more nodes join the distributed storage system at the same time, the new nodes can exploit the opportunity of exchange data among themselves in the repair process. Secondly, node repair may be done in batch. In systems like Total Recall, a recovery is triggered when the fraction of available nodes is below a certain threshold, and the failed nodes are then repaired as a group. The new nodes which are going to be regenerated are called {\em newcomers}. There are two ways in regenerating a group of newcomer: we may either repair them one by one, or repair them jointly with cooperation among the newcomers. It is shown in~\cite{HXWZL10, WXHO10} that further reduction of repair-bandwidth is possible with cooperative repair. Let the number of newcomers be~$r$. In \cite{HXWZL10} each newcomer is required to connect to all $n-r$ surviving storage nodes during the repair process, and in~\cite{WXHO10}, this requirement is relaxed such that different newcomers may have different number of connections. However, in both~\cite{HXWZL10} and \cite{WXHO10}, only the storage systems which minimize storage per node are considered.

In this paper, an example of cooperatively regenerating multiple newcomers is described in Section~\ref{sec:coop}. In Section~\ref{sec:mincut}, we define the information flow graph for cooperative repair, and derive a lower bound on repair-bandwidth. This lower bound is applicable to {\em functional repair}, where the content of a newcomer may not be the same as the failed node to be replaced, but the property that any $k$ nodes can reconstruct the original file is retained. The lower bound is function of the storage per node, and hence is an extension of the results in~\cite{HXWZL10}. A more practical and easier-to-manage mode of operation is called {\em exact repair}, in which the regenerated node contains exactly the same encoded data as in the failed node. In Section~\ref{sec:explicit},  we give a family of explicit code constructions which meet the lower bound, and hence show that the construction is optimal.

\section{An Example of Cooperative Repair} \label{sec:coop}

Consider the following example taken from~\cite{WD09}. Four data packets $A_1$, $A_2$, $B_1$ and $B_2$, are distributed to four storage nodes.  Each of them stores two packets. The first one stores $A_1$ and $A_2$, the second stores $B_1$ and $B_2$. The third and fourth nodes are parity nodes. The third node contains two packets $A_1+B_1$ and $2A_2+B_2$, and the last node contains $2A_1+B_1$ and $A_2+B_2$. Here, a packet is interpreted as an element in a finite field, and addition and multiplication are finite field operations. We can take $GF(5)$ as the underlying finite field in this example. Any data collector, after  downloading the packets from any two storage nodes, can reconstruct the four original packets by solving a system of linear equations. For example, if we download from the third and fourth nodes, we can recover $A_1$ and $B_1$ from packets $A_1+B_1$ and $2A_1+B_1$, and recover $A_2$ and $B_2$ from packets $2A_2+B_2$ and $A_2+B_2$.

Suppose that the first node fails.  To repair the first node, we can download four packets from any other two nodes, from which we can recover the two required packets $A_1$ and~$A_2$. For example, if we download the packets from the second and third nodes, we have $B_1$, $B_2$, $A_1+B_1$ and $2 A_2+B_2$. We can then recover $A_1$ by subtracting $B_1$ from $A_1+B_1$, and $A_2$ by computing $((2A_2+B_2) - B_2 )/2$. It is illustrated in~\cite{WD09} that we can reduce the repair-bandwidth from four packets to three packets, by making three connections to the three remaining nodes, and downloading one packet from each of them (Fig.~\ref{fig:IA}). Each of the three remaining nodes simply adds the two packets and sends the sum to the newcomer, who can then subtract off $B_1+B_2$ and obtain $A_1+2A_2$ and $2A_1+A_2$, from which $A_1$ and $A_2$ can be solved.  

\begin{figure}
\centering
\includegraphics[width=2.5in]{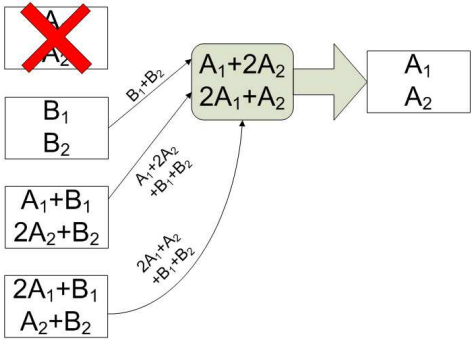}
\caption{Repairing a single node failure with minimum repair bandwidth} \label{fig:IA}
\end{figure}

When two storage nodes fail simultaneously, the computational trick mentioned in the previous paragraph no longer works. Suppose that the second and the fourth storage nodes fail at the same time. To repair both of them separately, each of the newcomers can download four packets from the remaining storage nodes, reconstruct packets $A_1$, $A_2$, $B_1$ and $B_2$, and re-encode the desired packets (Fig.~\ref{fig:SR}). This is the best we can do with separate repair. Using the result in~\cite{DGWR07}, it can be shown that any one-by-one repair process with repair-bandwidth strictly less than four packets per newcomer is infeasible.

If the two newcomers can exchange data during the regeneration process, the total repair-bandwidth can indeed be reduced from eight packets to six packets (Fig.~\ref{fig:CR}). The two newcomers first make an agreement that one of them downloads the packets with subscript 1, and the other one downloads the packets with subscript~2. (They can compare, for instance, their serial numbers in order to determine who downloads the packets with smaller subscript.) The first newcomer gets $A_1$ and $A_1+B_1$ from node 1 and 3 respectively, while the second newcomer gets $A_2$ and $2A_2+B_2$ from node 1 and 3 respectively.  The first newcomer then computes $B_1$ and $2A_1+B_1$ by taking the difference and the sum of the two inputs. The packet $B_1$ is stored in the first newcomer and $2A_1+B_1$ is sent to the second newcomer. Similarly, the second newcomer computes $B_2$ and $A_2+B_2$, stores $A_2+B_2$ in memory and sends $B_2$ to the first newcomer. Only six packet transmissions are required in this joint regeneration process.

\begin{figure}
\centering
\includegraphics[width=2.5in]{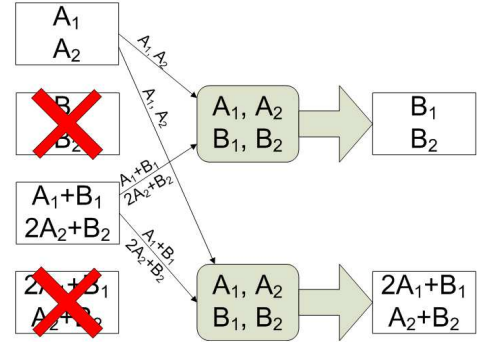}
\caption{Individual regeneration of multiple failures} \label{fig:SR}
\end{figure}

\begin{figure}
\centering
\includegraphics[width=2.5in]{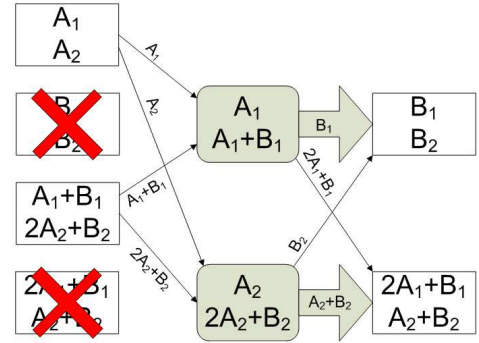}
\caption{Cooperative regeneration of multiple failures} \label{fig:CR}
\end{figure}

\section{Information Flow Graph and Min-Cut Bound} \label{sec:mincut}

We formally define the cooperative repair problem as follows. There are two kinds of entities in a distributed storage system, {\em storage nodes} and {\em data collectors}, and two kinds of operations, {\em file reconstruction} and {\em node repair}. A file of size $B$ units is encoded and distributed among the $n$ storage nodes, each of them stores $\alpha$ units of data. The file can be reconstructed by a data collector connecting to any $k$ storage nodes. Upon the failure of $r$ nodes, a two-phase repair process is triggered.
In the first phase, each of the $r$ newcomers connects to $d$ remaining storage nodes, and download $\beta_1$ units of data from each of them. After processing the data they have downloaded, the $r$ newcomers exchange some data among themselves, by sending $\beta_2$ units of data to each of the other $r-1$ newcomers. Each newcomer downloads $d\beta_1$ units of data in the first phase and  $(r-1)\beta_2$ units of data in the second phase. The repair-bandwidth per node is thus
$\gamma = d\beta_1 + (r-1) \beta_2$.

In the remaining of this paper, we will assume that $d\geq k$.

\begin{figure}
\centering
\includegraphics[width=2.5in]{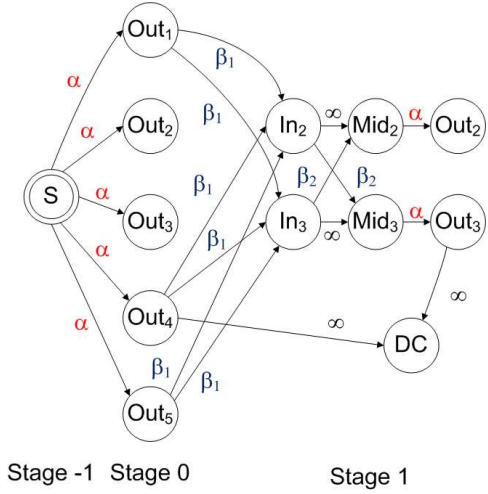}
\caption{Information flow graph} \label{fig:flow1}
\end{figure}

We construct an {\em information flow graph} as follows. There are three types of vertices in the information flow graph: one for the source data, one for the storage nodes and one for data collectors. The vertices are divided into stages. We proceed from one stage to the next stage after a  repair process is completed. (Fig.~\ref{fig:flow1}).

There is one single vertex, called the {\em source vertex}, in stage $-1$, representing the original data file. The $n$ storage nodes are represented by $n$ vertices in stage~0, called $\Out_i$, for $i=1,2,\ldots, n$. The source vertex is connected to each vertex in stage 0 by a directed edge with capacity~$\alpha$. For $s = 1,2,3,\ldots$, let $\mathcal{R}_{s}$ be the set of $r$ storage nodes which fail in stage $s-1$, and are regenerated in stage~$s$. The set $\mathcal{R}_{s}$ is a subset of $\{1,2,\ldots, n\}$ with cardinality~$r$. For each storage node $p$ in $\mathcal{R}_{s}$, we construct three vertices in stage $s$: $\In_p$, $\Mid_p$ and $\Out_p$. Vertex $\In_p$ has $d$ incoming edges with capacity $\beta_1$, emanated from $d$ ``out'' nodes in previous stages. We join vertex $\In_p$ and  $\Mid_p$ with a directed edge of infinite capacity. For $p, q \in \mathcal{R}_{s}$, $p\neq q$, there is a directed edge from $\In_p$ to $\Mid_q$  with capacity $\beta_2$.  Newcomer $p$ stores $\alpha$ units of data, and this is represented by a directed edge from $\Mid_p$ to $\Out_p$ with capacity~$\alpha$.

For each  data collector, we add a vertex, called $\mathsf{DC}$,  in the information flow graph. It is connected to $k$ ``out'' nodes with distinct indices, but not necessarily from the same stage, by  $k$ infinite-capacity edges.

We call an information flow graph constructed in this way $G(n,k,d,r; \alpha, \beta_1, \beta_2)$, or simply $G$ if the parameters are clear from the context. The number of stages is potentially  unlimited.

A {\em cut} in an information flow graph is a partition of the set of vertices, $(\mathcal{U}, \bar{\mathcal{U}})$,  such that the source vertex is in $\mathcal{U}$ and a designated data collector is in $\bar{\mathcal{U}}$. We associate with each cut a value, called the {\em capacity},  defined as the sum of the capacities of the directed edges from vertices in $\mathcal{U}$ to vertices in $\bar{\mathcal{U}}$. An example is shown in Fig.~\ref{fig:cut}.
The max-flow-min-cut bound in network coding for single-source multi-cast network states that if the minimum cut capacities between data collectors and the source is at no larger than $C$, then the amount of data we can send to each data collector is no more than $C$~\cite{ACLY00}.

\begin{theorem} Suppose that $d\geq k$. The minimum cut of an information flow graph $G$ is less than or equal to
\begin{equation}
\sum_{i=1}^k \ell_i \min\Big\{\alpha, \Big(d - \sum_{j=1}^{i-1}\ell_j\Big) \beta_1 + (r-\ell_i) \beta_2 \Big\}
\label{flow0}
\end{equation}
where  $(\ell_1, \ell_2, \ldots , \ell_k)$ is any
$k$-tuple of integers satisfying $\ell_1 + \ell_2+\ldots + \ell_k = k$ and $0\leq \ell_i \leq r$ for all~$i$.
\label{thm:mincut}
\end{theorem}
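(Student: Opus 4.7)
The bound is an upper bound on the minimum cut, so by max-flow min-cut duality it suffices to exhibit a single cut $(\mathcal{U},\bar{\mathcal{U}})$ of capacity at most the right-hand side of~(\ref{flow0}). The plan is to build a specific instance of the flow graph $G(n,k,d,r;\alpha,\beta_1,\beta_2)$ with $k$ successive repair stages, take the failure sets $\mathcal{R}_1,\ldots,\mathcal{R}_k$ pairwise disjoint so that $kr$ distinct Out-vertices are produced, and then designate $\ell_s$ of the $r$ newcomers at stage $s$ as ``sink-side'' newcomers, with index set $T_s$. The data collector is attached to the $k$ Out-vertices $\bigcup_s \{\Out_p : p\in T_s\}$. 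A critical topological choice is that at stage $i$, each sink-side newcomer takes its $d$ incoming $\beta_1$-connections from a set of $d$ prior Out-vertices that \emph{contains} all $\sum_{j<i}\ell_j$ sink-side Out-vertices already produced at stages $1,\ldots,i-1$. This is consistent only because $\sum_{j<i}\ell_j\leq k-1\leq d-1$ under the hypothesis $d\geq k$.

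I would then take $\bar{\mathcal{U}}$ to consist of the data collector, every sink-side Out-vertex, and---for each sink-side newcomer $p$ at each stage $i$---either \emph{both} of $\In_p,\Mid_p$ or \emph{neither} of them. All remaining vertices (the source vertex, every source-side newcomer's vertices, and the Out-vertices not contacted by the data collector) go into $\mathcal{U}$. The infinite-capacity edge $\In_p\to\Mid_p$ rules out splitting $\{\In_p,\Mid_p\}$ across the cut; placing both in $\mathcal{U}$ incurs a cost of $\alpha$ for the edge $\Mid_p\to\Out_p$, while placing both in $\bar{\mathcal{U}}$ incurs cost only for edges entering $\In_p$ and $\Mid_p$ from $\mathcal{U}$. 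Under the topology above, the latter quantity is $(d-\sum_{j<i}\ell_j)\beta_1$ from the $\beta_1$-edges originating at source-side Out-vertices, plus $(r-\ell_i)\beta_2$ from the same-stage $\beta_2$-edges originating at the In-vertices of the $r-\ell_i$ source-side newcomers. Taking the cheaper option per sink-side newcomer and summing yields exactly the expression in~(\ref{flow0}).

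The main obstacle, and essentially the only delicate step, is bookkeeping: I must verify that no infinite-capacity edge crosses the cut (which reduces to keeping each In/Mid pair on a common side) and that the prescribed connection pattern is realizable at every stage $i\geq 2$---that is, each sink-side newcomer genuinely has the required earlier Out-vertices available. This is precisely where $d\geq k$ is used, since it guarantees that $d$ distinct earlier vertices exist for the prescribed wiring regardless of how the $\ell_j$'s are distributed. The freedom to pick any $(\ell_1,\ldots,\ell_k)$ with $\sum_i\ell_i=k$ and $0\leq\ell_i\leq r$ is nothing more than the freedom in distributing the data collector's $k$ contacts across the $k$ stages, and yields the family of bounds stated in the theorem.
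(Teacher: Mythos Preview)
Your argument is essentially the paper's: both exhibit a specific repair history, wire each sink-side newcomer at stage $i$ to all earlier sink-side $\Out$-vertices (using $d\geq k$ to make room), and then, per stage, either keep the $\In/\Mid$ pairs on the source side (cost $\ell_i\alpha$) or move them to the sink side (cost $\ell_i[(d-\sum_{j<i}\ell_j)\beta_1+(r-\ell_i)\beta_2]$). The paper states the $\In/\Mid$ choice as a single per-stage decision rather than per newcomer, which is what makes the $(r-\ell_i)\beta_2$ count exact; your ``per sink-side newcomer'' phrasing is harmless only because the two options depend solely on $i$, so the optimal choice is automatically uniform across the stage. One small gap: insisting that $\mathcal{R}_1,\ldots,\mathcal{R}_k$ be pairwise disjoint tacitly assumes $n\geq kr$, which is not among the hypotheses---you only need the $k$ sink-side indices to be distinct and to avoid failing again before the data collector arrives, which is arranged by drawing the $r-\ell_i$ source-side failures at each stage from the $n-k\geq r$ non-sink indices.
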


\begin{proof}
By relabeling the nodes if necessary, suppose that a data collector $\mathsf{DC}$ connects to storage node 1 to node~$k$. Let $s_1 < s_2 < \ldots < s_m$ be the stages in which nodes 1 to $k$ are most recently repaired, where $m$ is an integer. We note that
$\{1,2,\ldots, k\}$ is contained in the union of $\mathcal{R}_{s_1}$, $\mathcal{R}_{s_2}, \ldots,  \mathcal{R}_{s_m}$.
For $i=1,2,\ldots, m$, let
$$\mathcal{S}_{i} := \big(\{1,2,\ldots,k\}\cap \mathcal{R}_{s_i} \big)  \setminus (\mathcal{R}_{s_{i+1}} \cup \cdots \cup \mathcal{R}_{s_m}).$$
The physical meaning of $\mathcal{S}_i$ is that the storage nodes with indices in $\mathcal{S}_i$ are repaired in stage $s_i$ and remain intact until the data collector $\mathsf{DC}$ shows up. The index sets $\mathcal{S}_i$'s are disjoint and their union is equal to $\{1,2,\ldots, k\}$.
We let $\ell_i$ to be the cardinality of $\mathcal{S}_i$. Obviously we have $\ell_1+\ell_2+\ldots+\ell_m= k$, $\ell_i\leq r$ for all~$i$, and $m\leq k$.

\begin{figure}
\centering
\includegraphics[width=3.2in]{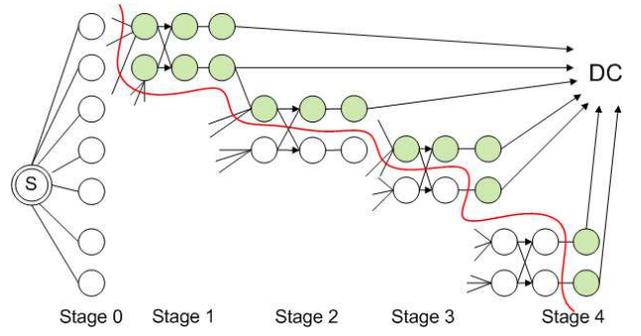}
\caption{A sample cut in the information flow graph.} \label{fig:cut}
\end{figure}

\begin{figure}
\centering
\includegraphics[width=2.2in]{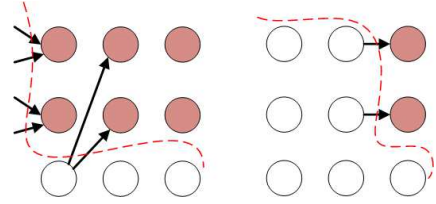}
\caption{Two different kinds of cuts within a stage.} \label{fig:cut2}
\end{figure}

For $i=1,2,\ldots, m$,  the $\ell_i$ ``out'' nodes in stage $s_i$ which are connected directly to $\mathsf{DC}$ must be in $\bar{\mathcal{U}}$, otherwise, there would be an infinite-capacity edge from $\mathcal{U}$ to $\bar{\mathcal{U}}$. In stage $s_i$, we consider two different ways to construct a cut. We either put all ``in'' and ``mid'' nodes associated to the storage nodes in $\mathcal{S}_i$ in $\bar{\mathcal{U}}$, or put all of them in $\mathcal{U}$. In Fig.~\ref{fig:cut2}, we graphically illustrate the two different cuttings. The shaded vertices are in $\bar{\mathcal{U}}$ and the edges from $\mathcal{U}$ to $\bar{\mathcal{U}}$ are shown.

Each ``in'' node in the first cut may connect to as small as $d - \sum_{j=1}^{i-1} \ell_j$ ``out'' nodes in $\mathcal{U}$ in previous stages. The sum of edge capacities from $\mathcal{U}$ to $\bar{\mathcal{U}}$ can be as small as
$\ell_i (d - \sum_{j=1}^{i-1} \ell_j) \beta_1 + (r-\ell_i)\ell_i \beta_2$.
In the second kind of cut, the sum of edge capacities from $\mathcal{U}$ to $\bar{\mathcal{U}}$ is $\ell_i \alpha$. After taking the minimum of these two cut values, we get
\begin{equation}
 \ell_i\min\Big\{ \alpha,  \Big(d-\sum_{j=1}^{i-1} \ell_j\Big)\beta_1 + (r-\ell_i) \beta_2 \Big\}. \label{flow2}
\end{equation}
We obtain the expression in~\eqref{thm:mincut} by summing~\eqref{flow2} over $i=1,2,\ldots, m$.
\end{proof}

A cut described in the proof of Theorem~\ref{thm:mincut} is called a cut of type $(\ell_1,\ell_2,\ldots, \ell_k)$.

We illustrate Theorem~\ref{thm:mincut} by  the example in Section~\ref{sec:coop}. The parameters are $n=4$, $d=k=r=2$, $B=4$, and $\alpha = B/k = 2$. The are two pairs of integers $(\ell_1, \ell_2)$, namely $(2,0)$ and $(1,1)$,  which satisfy the condition in Theorem~\ref{thm:mincut}.  The capacity of minimum cut, by Theorem~\ref{thm:mincut},
is no more than $2\min\{\alpha,2\beta_1\}$ and
$\min\{\alpha, 2\beta_1 + \beta_2\}+\min\{\alpha, \beta_1 + \beta_2\} $. The first cut imposes the upper bound $B \leq 2 \min\{\alpha, 2\beta_1)$ on the file size~$B$, which implies that $\beta_1 \geq 1$. The second cut imposes another constraint on $B$,
\[
4 \leq \min\{2, 2\beta_1 + \beta_2\}+\min\{2, \beta_1 + \beta_2\},
\]
from which we can deduce that $\beta_1+\beta_2 \geq 2$. After summing $\beta_1 \geq 1$ and $\beta_1+\beta_2 \geq 2$, we obtain $\gamma = 2\beta_1 + \beta_2 \geq 3$. The minimum possible repair-bandwidth $\gamma=3$
matched by the regenerating code presented in Section~\ref{sec:coop}. The regenerating code in Section~\ref{sec:coop} is therefore optimal.

\begin{figure}
\centering
\includegraphics[width=3in]{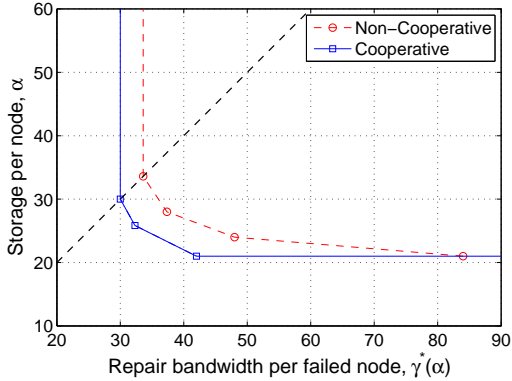}
\caption{Lower bound on repair-bandwidth ($B=84$, $d=4$, $k=4$, $r=3$)} \label{fig:tradeoff1}
\end{figure}

We can formulate the repair-bandwidth minimization problem as follows. Given the storage per node, $\alpha$, we want to minimize the objective function $\gamma=d\beta_1 +(r-1) \beta_2$ over all non-negative $\beta_1$ and $\beta_2$ subject to  the constraints that the file size $B$ is no more than the values in~\eqref{thm:mincut}, for all legitimate $(\ell_1,\ell_2,\ldots,\ell_k)$. It can be shown that the minimization problem can be reduced to a linear program, and hence can be effectively solved.
We let the resulting optimal value be denoted by $\gamma^*( \alpha)$. This is a lower bound on repair-bandwidth for a given value of $\alpha$.

In Fig.~\ref{fig:tradeoff1}, we illustrate the lower bound $\gamma^*(\alpha)$ for $B=84$, $d=4$, $k=4$ and $r=3$. For comparison, we plot the storage-repair-bandwidth tradeoff for non-cooperative one-by-one repair in Fig.~\ref{fig:tradeoff1}. From~\cite[Theorem 1]{DGWR2010}, the smallest
repair-bandwidth of a non-cooperative minimum-storage regenerating code is given by the formula $Bd/(k(d-k+1))$, which is equal to 84 in this example.  It can be shown that $\gamma^*(B/k) = B(d+r-1)/(k(d+r-k))$. In the next section, we give a construction of cooperative regenerating code which meets the lower bound $\gamma^*(B/k)$ when $d=k$.

\section{An Explicit Construction for Exact Repair}
\label{sec:explicit}
Exact repair has the advantage that the encoding vectors of the newcomers remain the same. This helps in reducing maintenance overhead. For non-cooperative and one-by-one repair, there are several exact constructions of regenerating code available in the literature, for example the constructions in~\cite{SR_isit2010} and~\cite{SRK10a}. In this section,  we construct a family of regenerating codes for cooperative repair with parameters $d=k \leq n-r$, which contains the example given in Section~\ref{sec:coop} as special case.

The recipe of this construction needs an maximal-distance separable (MDS) code of length~$n$ and dimension~$k$. Given $n$, let $q$ be the smallest prime power larger than or equal to~$n$. We use the Reed-Solomon (RS) code over $GF(q)$ generated by the following generator matrix
\[ \mathbf{G} := {\small
\begin{bmatrix}
1   & 1   & 1   & 1&   \ldots & 1 &1 \\
a_1 & a_2 & a_3 & a_4& \ldots & a_{n-1}& a_n \\
\vdots & \vdots & \vdots&\vdots & \ddots & \vdots& \vdots \\
a_1^{k-1}&a_2^{k-1}&a_3^{k-1} & a_4^{k-1}&\ldots &a_{n-1}^{k-1}& a_n^{k-1}
\end{bmatrix} }
\]
where $a_1$, $a_2,\ldots, a_n$ are $n$ distinct elements in $GF(q)$. Let $\mathbf{g}_i$ be the $i$th column of~$\mathbf{G}$.  Given $k$ message symbols in $GF(q)$, we put them in a row vector $\mathbf{m}^T = [m_1\ m_2\ \ldots\ m_k]$. (The superscript ``$^T$'' is the transpose operator.) We encode $\mathbf{m}^T$ into the codeword $\mathbf{m}^T \mathbf{G}$.  The MDS property of RS code follows from the fact that every $k\times k$ submatrix of $\mathbf{G}$ is a non-singular Vandermonde matrix.

We apply the technique called ``striping'' from coding for disk arrays. The whole file of size $B$ is divided into many stripes, or chunks, and each chunk of data is encoded and treated in the same way. In the following, we will only describe the operations on each stripe of data.

We divide a stripe of data into $kr$ packets, each of them is considered as an element in $GF(q)$. The $kr$ packets are laid out in an $r \times k$ matrix $\mathbf{M}$, called the {\em message matrix}.
To set up the distributed storage system, we first encode the message matrix $\mathbf{M}$ into
$\mathbf{MG}$, which is an $r\times n$ matrix. For $j=1,2,\ldots, n$, node $j$ stores the $r$ packets in the $j$th column of $\mathbf{MG}$. Let the $r$ rows of $\mathbf{M}$ be denoted by $\mathbf{m}_1^T$, $\mathbf{m}_2^T, \ldots, \mathbf{m}_r^T$.
The packets stored in node $j$ are $\mathbf{m}_i^T \mathbf{g}_j$, for $i=1,2,\ldots, r$.

A data collector downloads from $k$ storage nodes, say nodes $c_1$, $c_2, \ldots, c_k \in\{1,2,\ldots, n\}$. The $kr$ received packets are arranged in an $r\times k$ matrix. The $(i,j)$-entry of this matrix is $\mathbf{m}_i^T \mathbf{g}_{c_j}$. This matrix can be factorized  as
$\mathbf{M} \cdot [
\mathbf{g}_{c_1}\  \mathbf{g}_{c_2}\  \cdots\  \mathbf{g}_{c_k}]$. We can reconstruct the original file  by inverting the Vandermonde matrix $[\mathbf{g}_{c_1}\ \mathbf{g}_{c_2}\ \cdots \ \mathbf{g}_{c_k}]$.

Suppose that nodes $f_1$, $f_2,\ldots, f_r$ fail.  The $r$ newcomers first coordinate among themselves, and agree upon an order of the newcomers, say by their serial numbers. For the ease of notation, suppose that newcomer $f_j$ is the $j$th newcomer, for $j=1,2,\ldots, r$. The $j$th newcomer $f_j$ connects to any other $k$ remaining storage nodes, say $\pi_j(1), \pi_j(2), \ldots, \pi_j(k)$, and downloads the packets encoded from $\mathbf{m}_j^T$, namely, $\mathbf{m}_j^T \mathbf{g}_{\pi_j(1)}$, $\mathbf{m}_j^T \mathbf{g}_{\pi_j(2)}, \ldots, \mathbf{m}_j^T \mathbf{g}_{\pi_j(k)}$. (Recall that we assume $k=d$ in this construction.) Since $[\mathbf{g}_{\pi_j(1)} \ \mathbf{g}_{\pi_j(2)} \ldots \mathbf{g}_{\pi_j(k)}]$ is non-singular, newcomer $f_j$ can recover the message vector $\mathbf{m}_j^T$ after the first phase. In the second phase, newcomer $f_j$ computes  $\mathbf{m}_j^T \mathbf{g}_{f_i}$ for $i = 1,2, \ldots, r$, and sends the packet $\mathbf{m}_j^T \mathbf{g}_{f_i}$ to newcomer~$f_i$, $i\neq j$. A total of $r-1$ packets are sent from each newcomer in the second phase. After the exchange of packets, newcomer $f_j$ then has  the $r$ required packets $\mathbf{m}_i^T \mathbf{g}_{f_j}$, for $i=1,2,\ldots, r$. The repair-bandwidth per each newcomer is $k+r-1 = d+r-1$.

In this construction, we can pick the smallest prime power $q$ larger than or equal to $n$ as the size of the finite field. If the number of storage nodes $n$ increases, the finite field size increases linearly with~$n$.


\begin{theorem}
The cooperative regenerating code described above is optimal, in the sense that if $B=kr$, $k=d$, and each node stores $\alpha = r$ packets, the minimal repair-bandwidth  per each failed node is equal to $k+r-1$. \label{thm:MBSR}
\end{theorem}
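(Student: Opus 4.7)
The construction already realises per-newcomer bandwidth $k+r-1$, so only the converse is needed. The plan is to extract two scalar inequalities on $(\beta_1,\beta_2)$ from Theorem~\ref{thm:mincut} via two specially designed tuples $(\ell_1,\ldots,\ell_k)$, then add them.

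First observe that under $B=kr$, $\alpha=r$, $d=k$, every summand in~\eqref{flow0} is bounded by $\ell_i \alpha$, and $\sum_i \ell_i = k$ gives a total bound of $kr = B$. Thus the required inequality $B \leq$~\eqref{flow0} forces \emph{every} inner ``min'' expression (for each $i$ with $\ell_i \geq 1$) to attain $\alpha$, namely
\[
\Big(d-\sum_{j=1}^{i-1}\ell_j\Big)\beta_1 + (r-\ell_i)\beta_2 \;\geq\; r.
\]
Applied to the uniform tuple $(\ell_1,\ldots,\ell_k) = (1,1,\ldots,1)$ at $i=k$, this yields $\beta_1 + (r-1)\beta_2 \geq r$. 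Next, write $k = qr + s$ with $0 \leq s < r$ and apply the same principle to the tuple $(s,r,r,\ldots,r,0,\ldots,0)$ consisting of a leading $s$ (omitted when $s=0$) followed by $q$ copies of $r$: at the \emph{last} coordinate $i^\star$ whose value is $r$, one computes $d - \sum_{j<i^\star}\ell_j = r$ and $r - \ell_{i^\star} = 0$, so the inner inequality collapses to $r\beta_1 \geq r$, i.e.\ $\beta_1 \geq 1$.

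Combining the two,
\[
\gamma = d\beta_1 + (r-1)\beta_2 = (k-1)\beta_1 + \bigl[\beta_1 + (r-1)\beta_2\bigr] \geq (k-1) + r = k+r-1,
\]
matching achievability. The principal obstacle is the choice of the second tuple: the uniform cut alone does not rule out $\beta_1 = 0$. Recognising that packing $r$-blocks at the \emph{tail} of $(\ell_1,\ldots,\ell_k)$ kills the $\beta_2$-contribution (since $r - \ell_i = 0$) while simultaneously reducing the $\beta_1$-coefficient to exactly $r$ is the decisive observation that yields the clean bound $\beta_1 \geq 1$.
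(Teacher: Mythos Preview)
Your argument is correct when $r\le k$, and in that regime it is essentially the paper's own Case~1: the paper uses the tuple $(\underbrace{1,\ldots,1}_{k-r},r,0,\ldots,0)$ whereas you use $(s,r,\ldots,r,0,\ldots,0)$, but both work by arranging an entry equal to $r$ so that the $\beta_2$-coefficient $r-\ell_i$ vanishes and one reads off $\beta_1\ge 1$ directly.

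The genuine gap is the case $r>k$, which the theorem statement does not exclude and the paper explicitly treats. When $r>k$ the division $k=qr+s$ gives $q=0$ and $s=k$, so your tuple degenerates to $(k,0,\ldots,0)$ and contains no entry equal to $r$; the phrase ``the last coordinate $i^\star$ whose value is $r$'' is then vacuous. More fundamentally, in this regime every admissible tuple has $\ell_i\le k<r$, so $r-\ell_i>0$ for every nonzero entry and no single inner inequality can isolate $\beta_1$. In fact $\beta_1\ge 1$ is \emph{not} forced by the cut constraints when $r>k$: taking $\beta_1=0$ and $\beta_2=r/(r-k)$ one has $(r-\ell_i)\beta_2\ge r$ for every $\ell_i\le k$, so each min in~\eqref{flow0} equals $\alpha$ and every cut value is exactly $kr=B$. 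The paper therefore handles $r>k$ by a different linear combination: from the tuples $(k,0,\ldots,0)$ and $(k-1,1,0,\ldots,0)$ it extracts
\[
k\beta_1+(r-k)\beta_2\ge r,\qquad \beta_1+(r-1)\beta_2\ge r,
\]
and adding $(r-1)$ times the first to $k$ times the second gives $r\bigl(k\beta_1+(r-1)\beta_2\bigr)\ge r(k+r-1)$. You need to supply this (or an equivalent) second case.
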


\begin{proof} We use the notation as in Theorem~\ref{thm:mincut}. 
The capacity of a cut of type $(\ell_1, \ell_2, \ldots, \ell_k)$, as shown in~\eqref{flow0}, is an upper bound on $kr$.
If any summand $(d-\sum_{j=1}^{i-1} \ell_j)\beta_1+(r-\ell_i)\beta_2$ in~\eqref{flow0} is strictly less than $\alpha=B/k=r$ for any $i$, then the value in~\eqref{flow0} is strictly less than $\sum_{i=1}^k \ell_i r = kr$.
This would violate the fact that $kr$ is upper bounded by~\eqref{flow0}. Hence we have \begin{equation}
(k-\sum_{j=1}^{i-1} \ell_j)\beta_1+(r-\ell_i)\beta_2 \geq  B/k = r \label{eq:cut2}
\end{equation}
for any cut associated with $(\ell_1, \ell_2, \ldots, \ell_k)$ and any $i$.

{\em Case 1: $r \leq k = d$}. From a cut of type $(\ell_1,\ell_2,\ldots, \ell_k) = (1,1,\ldots, 1)$, we have
\begin{equation}
 \beta_1 + (r-1) \beta_2 \geq r \label{eq:cut3}
\end{equation}
from \eqref{eq:cut2}. From another cut of type 
$(\ell_1,\ell_2, \ldots, \ell_k) = (\underbrace{1,1,\ldots,1}_{k-r},r, 0,\ldots)$, from \eqref{eq:cut2} again, we obtain the condition
$$(k-(k-r))\beta_1 + (r-r) \beta_2  = r \beta_1 \geq r$$
which implies that $\beta_1 \geq 1$. We then add $(k-1)\beta_1 \geq k-1$ to \eqref{eq:cut3}, and get $\gamma = k\beta_1 + (r-1)\beta_2 \geq k+r-1$.

{\em Case 2: $r > k = d$}. Consider the two cuts associated with $(\ell_1,\ell_2,\ldots, \ell_k)$ equal to $(k,0,\ldots, 0)$ and $(k-1,1,0,\ldots, 0)$. We obtain the following two inequalities from~\eqref{eq:cut2},
\begin{align}
k \beta_1 + (r-k) \beta_2 \geq r \label{eq:a}\\
\beta_1 + (r-1) \beta_2 \geq r. \label{eq:b}
\end{align}
We multiply both sides of~\eqref{eq:a} by $(r-1)$, and multiply both sides of~\eqref{eq:b} by $k$.  After adding the two resulting inequalities, we  get $\gamma = k\beta_1 + (r-1)\beta_2 \geq k+r-1$.

The repair-bandwidth per failed node is therefore cannot be less than $k+r-1$. The repair-bandwidth of the code constructed in this section matches this lower bound, and is hence optimal.
\end{proof}

The regenerating code constructed in this section has the advantage that a storage node participating in a regeneration process is required to read and exactly the same amount of data to be sent out, without any arithmetical operations. This is called  the {\em uncoded repair}  property~\cite{ElRouayheb10}.  


We compare below the repair-bandwidth of three different modes of repair, all with parameters $n=7$, $B=84$, $k=4$ and $\alpha=B/4 = 21$. Suppose that three nodes fail simultaneously.

(i) Individual repair without newcomer cooperation. Each newcomer connects to the four remaining storage nodes. As calculated in the previous section, the repair-bandwidth per newcomer is $84$.

(ii) One-by-one repair utilizing the newly regenerated node as a helper. The average repair-bandwidth per newcomer is
\[
 \frac{1}{3} \Big( \frac{84(4)}{4(4-4+1)} + \frac{84(4)}{4(5-4+1)} +\frac{84(4)}{4(6-4+1)} \Big)=51.333.
\]
The first term in the parenthesis is the repair-bandwidth of the first newcomer, which downloads from the four surviving nodes, the second term is the repair-bandwidth of the second newcomer, who connects to the four surviving nodes and the newly regenerated newcomer, and so on.

(iii) Full cooperation among the three newcomers. The repair-bandwidth per newcomer can be reduced to~42 using the regenerating code given in this section. We thus see that newcomer cooperation is able to reduce the repair-bandwidth of a distributed storage system significantly.




\end{document}